\newtheorem{theorem}{Theorem}
\newtheorem{corollary}{Corollary}[theorem]
\begin{document}

\title{Anchored Network Users: Stochastic Evolutionary Dynamics of Cognitive Radio Network Selection 
}

\author{Ik Soo Lim, Peter Wittek
\thanks{The first author is with School of Computer Science, Bangor University, UK. e-mail: i.s.lim@bangor.ac.uk.
The second author is with ICFO-The Institute of Photonic Sciences, Barcelona Institute of
Science and Technology, 08860 Castelldefels (Barcelona), Spain.  
}
}

\markboth{Anchored Network Users
}
{}


\maketitle

\begin{abstract}
To solve the spectrum scarcity problem,
the cognitive radio technology involves licensed users and unlicensed users.
A fundamental issue for the network users is  whether it is better to act as a licensed user by using a primary network
or an unlicensed user by using a secondary network.
To model the network selection process by the users,
the deterministic replicator dynamics is often used, 
but in a less practical way that it requires each user to know global information on the network state
for reaching  a Nash equilibrium.
This paper addresses the network selection process
in a more practical way such that only noise-prone estimation of local information is required
and, yet, it obtains an efficient system performance.
\vspace{0.2cm}

Keywords -- cognitive radio networks, network selection, population games, replicator dynamics, Markov chains
\end{abstract}



\section{Introduction}
Cognitive radio (CR)  is a promising technology to solve the spectrum scarcity problem \cite{Liang:2011fk}. 
In CR networks, primary users (PUs) have licenses to operate in a certain spectrum band whereas secondary users (SUs) have no spectrum licenses and need to share  the spectrum holes left available by PUs without interfering with them.
In this paper,
we focus on a fundamental issue of
whether it is better for a CR user to act as a PU	
with guaranteed quality-of-service  at a higher price
or an SU
with degraded quality-of-service at a lower price.
Ref.\,\cite{Elias:2013qf} addresses this network selection problem
by the deterministic evolutionary dynamics
based on replicator equations,
assuming that each CR user
dynamically adjusts its network selection.
Although it would 
lead to the Nash equilibrium  
of network traffic yielding an efficient system performance,
the approach is less practical 
in the sense that
it assumes each of CR users to know the exact global information on the network state.
In this paper,
we address the network selection problem in a more realistic way, assuming that each CR user only needs to know error-prone local information.

\section{Network Models \& Equilibrium Computation}

As in \cite{Elias:2013qf},
we consider  a CR system  consisting of a primary network and a secondary network with a population of CR users,
where the secondary network  coexists with the primary one at the same location and on the same spectrum band.
Once the primary and secondary operators set the prices of network subscription,
each of the CR users dynamically chooses the network to use.
The wireless channel is modelled as an $M|M|1$ queue with the service rate (i.e.\,the maximum achievable transmission rate) $C$ and the arrival rate $\lambda$.
The cost $-\pi_{\scriptscriptstyle P}$ (or utility $\pi_{\scriptscriptstyle P}$) perceived by a PU is a combination of the service delay experienced in the network and the price to access this network,
\begin{equation}
-\pi_{\scriptscriptstyle P} = \frac{\alpha}{C -\lambda_{\scriptscriptstyle P}} +p_1 =\frac{\alpha}{C -\lambda x_{\scriptscriptstyle P}} +p_1
\label{eq_util_primary}
\end{equation}
where $\lambda_{\scriptscriptstyle P}$ denotes the overall transmission rate of PUs,
$\alpha$ a weighting parameter of delay with respect to the network subscription price $p_1$ charged by the primary network operator,
and
$x_{\scriptscriptstyle P} \in [0,1]$ the frequency or population share of PUs.
The cost $-\pi_{\scriptscriptstyle S}$ by an SU is 
\begin{equation}
-\pi_{\scriptscriptstyle S} = \frac{\alpha}{C -\lambda} +p_2
\label{eq_util_secondary}
\end{equation}
where $p_2$ denotes the price charged by the secondary network operator.
PUs and SUs experiencing  the same cost,
the equilibrium traffic $\lambda_{\scriptscriptstyle P}^*$ for the primary network is 
\begin{equation} 
\lambda_{\scriptscriptstyle P}^*
=\frac{\alpha \lambda -C(C -\lambda)(p_1 -p_2)}{\alpha -(C -\lambda)(p_1 -p_2)}.
\end{equation}
See Ref.\,\cite{Elias:2013qf} for the justification of the cost functions as well as the derivation of the prices and the equilibrium traffic.

\section{Critical Review of Applications of Replicator Dynamics}

Even if there exists  the equilibrium traffic that yields an efficient system performance,
it is a different matter whether CR users can reach the equilibrium.
For the latter,
Ref.\,\cite{Elias:2013qf} models 
the network selection process of CR users according to replicator dynamics, 
where users  individually adjust their selection based on the observed network state.

\subsection{Replicator Dynamics}

Originating from evolutionary biology, 
the replicator dynamics describes how the frequency of 
individuals using a strategy
in a population changes over time 
under the natural selection \cite{taylor1978evolutionary}.
Given a population of $n_i$ individuals using strategy $i \in \{1,\ldots,I\}$,
the replicator dynamics is described
with a set of ordinary differential equations
\begin{equation}
\frac{d x_i}{dt}
   =  \mathcal{K} x_i\left(\pi_i(\mathbf{x})  -\bar{\pi}(\mathbf{x})\right)
   \label{eq_replicator}
\end{equation}
where  
$x_i =n_i/\sum_{j=1}^I n_j$ is the frequency of individuals using strategy $i$,
$\mathcal{K}$ a constant, 
$\mathbf{x} =(x_1,\ldots,x_i,\ldots,x_I)$ the population state,
$\pi_i (\mathbf{x})$  the (expected) utility of strategy $i$,
and $\bar{\pi} (\mathbf{x}) =\sum_{i=1}^I x_i \pi_i(\mathbf{x})$
 the population mean of utility.
According to Eq.\,\ref{eq_replicator},
the frequency $x_i$ increases when its utility is larger than the population mean 
and it decreases when its utility is lower than the mean.

The replicator dynamics leads the population of individuals to a Nash equilibrium.
Because of this favourable feature,
the replicator equations have been widely applied to describe individuals to adaptively adjust their strategies over time
and reach a Nash equilibrium in problems related to CR networks
\cite{Elias:2013qf, niyato2009dynamics, 
Chen:2013cr}.
In this setting,
strategy $i$ is analogous to a strategy of, say, choosing network $i$. 
These applications interpret the replicator equations
as the description of how each of individuals should behave.
According to this interpretation,
however,
each individual is required to know some of the global information,
which makes it less practical.
In the network selection problem,
for instance,
Ref.\,\cite{Elias:2013qf}
assumes that CR users need to know the global information such as  $x_{\scriptscriptstyle P}$  and  $x_{\scriptscriptstyle S}$ (the frequency of PUs and SUs, respectively)
in order to select their networks.

\subsection{From Individual Behaviours to Population Dynamics}

The replicator equations Eq.\,\ref{eq_replicator} describe the dynamics at a population level,
but not necessarily specify how each individual should choose a pure strategy \cite{Moon:2016ly}.
In the original setting of evolutionary biology,
the replicator equations describe the population dynamics 
arising from a set of individuals replicating themselves by reproduction in a way proportional to their utilities,
not requiring any global information  \cite{taylor1978evolutionary}.
Other than reproduction,
social learning or imitation of pure strategies
can also yield the replicator population dynamics  \cite{schlag1998why}.	
The relation between individual behaviour and population dynamics can be more explicitly represented in the following form
\begin{equation}
\frac{d x_i}{dt} = \sum_{j=1}^I x_j \rho_{ji}(\mathbf{x}) -x_i\sum_{j=1}^I \rho_{ij}(\mathbf{x})
\label{eq_Protocol_Population}
\end{equation}
where $\rho_{ij}$ denotes a revision protocol or a conditional switch rate that 
describes when and how an individual in the population decide to switch strategy $i$ to $j$ \cite{Sandholm:2003kl}.
The first summation captures the in-flow of individuals switching to strategy $i$ 
and the second one, the out-flow of those switching from $i$ to other strategies.
	
The revision protocol of pairwise proportional imitation
$\rho_{ij}(\mathbf{x}) = x_j [\pi_j(\mathbf{x}) - \pi_i(\mathbf{x})]_+$
(where $[z]_+ =z$  if $z>0$ and $[z]_+ =0$ if $z \le0$)
yields the replicator population dynamics of Eq.\,\ref{eq_replicator},
which can be easily shown by plugging $\rho_{ij}(\mathbf{x})$ into Eq.\,\ref{eq_Protocol_Population}
 \cite{schlag1998why}.
 Note that an individual does need to know the population share $x_j$ for the pairwise imitation.
 This term in the protocol merely accounts for an individual to randomly choose an opponent,
 who of strategy $j$ is selected with probability $x_j$.
The individual imitates the strategy of the opponent only if the opponent's utility is higher than his own, doing so with probability proportional to the utility difference.        
The pairwise imitation drives the system to a Nash equilibrium without any need of global information.
The  imitation protocol and a variant of it  have been recently applied to network-related problems,
in order to reach a Nash equilibrium of the system-wide optimum   in distributed manners \cite{Iellamo:2013fk, Chen:2015dq}.
If applied to the network selection problem, thus,
the imitation protocol would yield the Nash equilibrium of network traffic in a more practical manner than the approach of Ref.\,\cite{Elias:2013qf} does.

\section{Imitation-based Network Selection}

\subsection{Markov Chains} 

We use a Markov chain
to model a  population of CR users conducting
the imitation-based network selection.	
The finite state space of the Markov chain is
$\mathcal{F}^N =\{k : 0 \le k \le N\}$
where $k$ is an integer-valued random variable denoting the number of PUs among $N$ network users; 
there are $N-k$ of SUs.
The arrival rate of PUs is
$\lambda_{\scriptscriptstyle P}(k) =\lambda k /N$.
Since there are only two strategies,
the stochastic evolution can be described by a birth-death process 
 on the one-dimensional finite state space $\mathcal{F}^N$ \cite{KARLIN:1975fc}.
 In each stochastic event, the state variable $k$ can either remain unchanged or move to $k+1$ or $k-1$.
With transitions only occurring between adjacent states,
the transition probabilities are 
\begin{align}
T_k^+ & =  \frac{N-k}{N}  \frac{k}{N-1}
q_{\scriptscriptstyle S \rightarrow  P}^k,\\
T_k^- & = \frac{k}{N} \frac{N-k}{N-1} 
q_{\scriptscriptstyle P \rightarrow S}^k,\\
T_k^0 & = 1 -T_k^+ -T_k^-  
\end{align}
where $T_k^+$ denotes the probability of a transition from state $k$ to $k +1$,
$T_k^-$  from $k$ to $k -1$,
$T_k^0$ remaining in  $k$,
$q_{\scriptscriptstyle \scriptscriptstyle S \rightarrow   P}^k$ the probability 
of an SU imitating 
a given PU (i.e.\,switching to be a PU) when the total number of PUs is $k$, 	 	
$q_{\scriptscriptstyle P \rightarrow S}^k$ the probability of a PU imitating a given SU.
We assumes a nondecreasing
function 
$q(z)$ 
for the imitation probabilities
$q_{\scriptscriptstyle S \rightarrow  P}^k =q\left(\pi_{\scriptscriptstyle P}^k -\pi_{\scriptscriptstyle S}^k\right)$ and
$q_{\scriptscriptstyle \scriptscriptstyle P \rightarrow S}^k =q\left(\pi_{\scriptscriptstyle S}^k -\pi_{\scriptscriptstyle P}^k\right)$.

\subsection{Noise-free Imitation}
For the	
noise-free imitation,
we have the imitation probability $q(z) = 0$ for $z \le 0$
and $q(z)$ strictly increasing for $z >0$.
Under the revision protocol of the pairwise proportional imitation,
for instance,
we have $q_{\scriptscriptstyle S \rightarrow  P}^k =q\left(\pi_{\scriptscriptstyle P}^k -\pi_{\scriptscriptstyle S}^k\right) = [\pi_{\scriptscriptstyle P}^k -\pi_{\scriptscriptstyle S}^k]_+$
and $q_{\scriptscriptstyle P \rightarrow S}^k =q\left(\pi_{\scriptscriptstyle S}^k -\pi_{\scriptscriptstyle P}^k\right) =[\pi_{\scriptscriptstyle S}^k -\pi_{\scriptscriptstyle P}^k]_+$.
Let us define 
 $k^*\equiv \lceil N x_{\scriptscriptstyle P}^*\rceil = \lceil{N \lambda_{\scriptscriptstyle P}^* /\lambda}\rceil$
where $x_{\scriptscriptstyle P}^*$ denotes the equilibrium point of the replicator equation
$d x_{\scriptscriptstyle P} /dt
   =  \mathcal{K} x_{\scriptscriptstyle P}\left(\pi_{\scriptscriptstyle P}(\mathbf{x})  -\bar{\pi}(\mathbf{x})\right)
$,
$x_{\scriptscriptstyle P}^* = \lambda_{\scriptscriptstyle P}^* /\lambda$
and $\lceil z \rceil$ the least integer greater than or equal to $z$.
Since $\pi_{\scriptscriptstyle P} > \pi_{\scriptscriptstyle S}$ for $x_{\scriptscriptstyle P} < x_{\scriptscriptstyle P}^*$, 
$\pi_{\scriptscriptstyle P} = \pi_{\scriptscriptstyle S}$ for $x_{\scriptscriptstyle P} = x_{\scriptscriptstyle P}^*$
and
$\pi_{\scriptscriptstyle P} < \pi_{\scriptscriptstyle S}$ for $x_{\scriptscriptstyle P} > x_{\scriptscriptstyle P}^*$
from Eq.\,\ref{eq_util_primary}
and \ref{eq_util_secondary},
we get $T_k^- =0$ for $k \le k^* -1$ and $T_k^+ =0$ for $k \ge k^*$,
assuming a non-boundary initial state (i.e.\,$k \ne 0, N$
at time $t=0$).
Thus,
we have the stationary probability distribution of the system
\begin{equation}
\psi_{k^* -1} =\frac{T^-_{k^*}}{T^+_{k^* -1} +T^-_{k^*}},
\psi_{k^*} =\frac{T^+_{k^* -1}}{T^+_{k^* -1} +T^-_{k^*}},\\
\psi_k =0 \mbox{ otherwise.} 
\label{eq_dist_free}
\end{equation}
The deterministic replicator dynamics well approximates the population dynamics
arising from the 
imitation-based decision process by individual network users
in the sense that the Nash equilibrium $x_{\scriptscriptstyle P}^*$ reached by the replicator dynamics well approximates the stationary distribution $\psi_k$
concentrated around 
$k=k^*  \approx  x_{\scriptscriptstyle P}^* N
$.

\subsection{Noisy Imitation}

Although the imitation protocol
relaxes the requirement of the global information,
it still suffers from an unrealistic assumption.
It assumes 	
that a user should never imitate an opponent user  of a lower utility.
In practice, 
it is difficult to strictly meet this assumption of `noise-free' imitation due to various reasons.
In the network selection problem,
a user needs to observe and estimate the expected service delay in the network,
which in general deviates from the ground truth of the expected delay.
Being self-interested, an opponent user may deliberately inform of inaccurate utility information.
Thus, it is more realistic to assume the `noisy' imitation such that a user could imitate an opponent of lower utility,
due to a decision-making based on the error-prone estimations.
	
For the noisy imitation, we assume that the imitation probability 
$q(z)$ is strictly increasing.
The key difference from that of the noise-free imitation is
$q(z) > 0$ even for $z \le 0$, reflecting the possible switch to the other network of a lower utility although the probability of such suboptimal behaviour is smaller than that of switching from a network of lower utility to a higher one.
Since we have $T_k^+ > 0$ and $T_k^- > 0$ for $k \in \{1,2,\ldots,N-1\}$,
the boundary states $k=0$ and $k=N$ are reachable from any other states $\{1,..,N-1\}$.
We also have $T_0^0 = T_N^0 =1$.
Therefore, a Markov chain for the noisy imitation is an absorbing Markov chain
with two absorbing states $k=0$ and $k=N$, 
which are the only stationary states.
All the other states $\{1,..,N-1\}$ are transient, including $k=k^* -1$ and $k=k^*$ 
that would correspond to the Nash equilibrium.
In other words,  regardless of the initial state,
the noisy imitation leads the system to end up with either 
 all-SUs ($k=0$) or all-PUs ($k=N$),
 driving the system away from the Nash equilibrium.
Failing to capture this stochastic effect,
the replicator dynamics is less than adequate 
to model the network selection process in the noise-prone realistic situations.

\section{Network Selection with Anchored Users}

The absorbing states of all-PUs and all-SUs
are not good for the network operators nor CR users.
An operator with no CR users of its network collects no income
while CR users suffer from the utility lower than the 
one that would be obtained at the Nash equilibrium.
Thus, there is a clear need to prohibit the system from being absorbed in any of the suboptimal states.

\subsection{Noisy Imitation with Anchored Users}	
The boundary states $k=0$ and $k=N$ are the absorbing states under the noisy imitation
because there is no individual of a different strategy available for imitation and hence no change, 
once in one of the two states.
However,
the absorbing states could be avoided if some individuals behave irrespective of their utility \cite{Binmore:1997ly,Sandholm:2012ly}.
In the context of the network selection,
for instance,
if at least one user for each network
never switches the network (irrespective of the utility)
 and is always available for imitation by other users,
 then there would be no absorbing state.
However,
it is a strong assumption that anyone among self-interested users should act like this, which would be a kind of an altruistic act.  
On the other hand,
it would be rational for a self-interested network operator to set up `puppet' users 
that are anchored to the network at the operator's own cost
since it ensures avoiding the extinction of its genuine users;
the anchored users are always available for imitation by genuine users of the other network.

With the anchored users in place,
the transition probabilities are 
\begin{align}
T_k^+ & =  \frac{N-k}{N}  \frac{k+A_{\scriptscriptstyle P}}{N-1+A_{\scriptscriptstyle P} +A_{\scriptscriptstyle S}}q\left(\pi_{\scriptscriptstyle P}^k -\pi_{\scriptscriptstyle S}^k\right), \\
T_k^- & = \frac{k}{N} \frac{N-k+A_{\scriptscriptstyle S}}{N-1+A_{\scriptscriptstyle P} +A_{\scriptscriptstyle S}} q\left(\pi_{\scriptscriptstyle S}^k -\pi_{\scriptscriptstyle P}^k\right), \\
T_k^0 & = 1 -T_k^+ -T_k^-  
\label{eq_transition_Prob} 
\end{align}
where $A_{\scriptscriptstyle P}$ and $A_{\scriptscriptstyle S}$ denote the number of anchored PUs and SUs, respectively.
Note that $N$ and $k$ count only genuine users, but not  anchored (puppet) users.
Since $T_k^+ >0$  for $k \in \{0,1,\ldots,N-1\}$ and $T_k^- >0$ for $k \in \{1,\ldots,N\}$,
it is possible to go from every state to every state.
With the anchored users,
hence, the Markov chain for the noisy imitation is not absorbing anymore,
but it is irreducible. 
An irreducible Markov chain yields a unique stationary distribution 
that indicates the likelihood of finding the population in any particular state in the long-run.
Although it can be generally obtained as the left eigenvector of the transition matrix with eigenvalue one,
the stationary distribution $\psi_k$ for a birth-death process with two strategies can be explicitly represented by	
\begin{equation}
	\psi_k =\psi_0\Pi^k_{n=1} T^+_{n-1} /T^-_n
	\label{eq_Stationary_dist}
\end{equation}
where $\psi_0$ is determined by $\sum_{k=0}^N \psi_k =1$
\cite{gardiner2004handbook}.

\subsection{Stationary Distribution and Nash Equilibrium}

With the inclusion of the anchored users, we can not only remove the sub-optimal absorbing states
but also establish a link between the system states of the noise-free and noisy imitation protocols.	
For $A_{\scriptscriptstyle P} = A_{\scriptscriptstyle S} =1$,
we show that the peak of the stationary distribution well corresponds to
the Nash equilibrium 
that the replicator population dynamics arising from  the noise-free imitation would drive the system towards.

\begin{theorem}
For $A_{\scriptscriptstyle P} = A_{\scriptscriptstyle S} =1$,
we have $\arg \max_{k\in \{0,\ldots,N\}}\psi_k   = k^*-1 \mbox{ or } k^*$
where $k^*=\lceil N x_{\scriptscriptstyle P}^*\rceil = \lceil{N \lambda_{\scriptscriptstyle P}^* /\lambda}\rceil$
\end{theorem}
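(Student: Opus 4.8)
The plan is to locate the maximiser of the stationary distribution by tracking the one-step ratio $r_k := \psi_k/\psi_{k-1} = T^+_{k-1}/T^-_k$, which by Eq.~\ref{eq_Stationary_dist} decides whether $\psi$ is locally rising ($r_k \ge 1$) or falling ($r_k < 1$). First I would substitute $A_{\scriptscriptstyle P}=A_{\scriptscriptstyle S}=1$ into Eq.~\ref{eq_transition_Prob} and form this ratio. The decisive observation is that the combinatorial prefactors cancel \emph{exactly}: $T^+_{k-1}$ carries the factor $\tfrac{N-k+1}{N}\cdot\tfrac{k}{N+1}$ while $T^-_{k}$ carries $\tfrac{k}{N}\cdot\tfrac{N-k+1}{N+1}$, and these coincide. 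Hence
\begin{equation}
r_k = \frac{q\!\left(\pi_{\scriptscriptstyle P}^{k-1}-\pi_{\scriptscriptstyle S}^{k-1}\right)}{q\!\left(\pi_{\scriptscriptstyle S}^{k}-\pi_{\scriptscriptstyle P}^{k}\right)},
\end{equation}
and the entire problem reduces to comparing imitation probabilities. This cancellation is special to the symmetric single-anchor case and is precisely what keeps the peak aligned with the Nash value.

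Next I would set $\Delta_k := \pi_{\scriptscriptstyle P}^{k}-\pi_{\scriptscriptstyle S}^{k}$ and exploit that $q$ is strictly increasing for the noisy imitation. Then $r_k \ge 1$ holds iff $q(\Delta_{k-1}) \ge q(-\Delta_k)$, i.e.\ iff $S_k := \Delta_{k-1}+\Delta_k \ge 0$. From Eqs.~\ref{eq_util_primary}--\ref{eq_util_secondary}, $\pi_{\scriptscriptstyle S}^{k}$ is constant in $k$ while $\pi_{\scriptscriptstyle P}^{k}=-\alpha/(C-\lambda k/N)-p_1$ is strictly decreasing in $k$ over the admissible range $\lambda k/N < C$; hence $\Delta_k$, and with it $S_k$, is strictly decreasing in $k$. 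It follows that $\psi$ is unimodal and that $\arg\max_k \psi_k = k_0$, the largest index with $S_{k_0}\ge 0$.

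It then remains to bracket the sign change of $S_k$ around $k^*$. The definition $k^*=\lceil N x_{\scriptscriptstyle P}^*\rceil$ gives $k^*-1 < N x_{\scriptscriptstyle P}^* \le k^*$, and together with the sign pattern already noted for the noise-free case ($\Delta_k>0$ when $k/N<x_{\scriptscriptstyle P}^*$ and $\Delta_k\le 0$ when $k/N\ge x_{\scriptscriptstyle P}^*$) this yields $\Delta_{k^*-1}>0$ and $\Delta_{k^*}\le 0$. Monotonicity of $\Delta$ then gives $S_{k^*-1}=\Delta_{k^*-2}+\Delta_{k^*-1}>0$ (both summands positive), forcing $k_0\ge k^*-1$, and $S_{k^*+1}=\Delta_{k^*}+\Delta_{k^*+1}<0$ (both summands nonpositive, the second strictly negative), forcing $k_0\le k^*$. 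Hence $k_0\in\{k^*-1,k^*\}$, the exact choice being dictated by the sign of $S_{k^*}$, which completes the argument.

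The main obstacle is conceptual rather than computational: recognising and exploiting the exact cancellation of the combinatorial prefactors, which is the engine of the result and is genuinely tied to $A_{\scriptscriptstyle P}=A_{\scriptscriptstyle S}=1$. The only residual care is at the boundary, when $k^*\le 1$ or $k^*\ge N$ and an index such as $k^*-2$ or $k^*+1$ falls outside $\{0,\dots,N\}$; there I would argue directly that $\psi$ is nondecreasing up to, or strictly decreasing away from, the boundary, which still confines the maximiser to $\{k^*-1,k^*\}$. I would also record the standing assumption $\lambda<C$, so that all utilities are finite and $\Delta_k$ is well defined throughout the state space.
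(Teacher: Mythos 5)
Your proposal is correct and follows essentially the same route as the paper's proof: after cancelling the combinatorial prefactors for $A_{\scriptscriptstyle P}=A_{\scriptscriptstyle S}=1$, both arguments reduce $\psi_k/\psi_{k-1}$ to $q\left(\pi_{\scriptscriptstyle P}^{k-1}-\pi_{\scriptscriptstyle S}^{k-1}\right)/q\left(\pi_{\scriptscriptstyle S}^{k}-\pi_{\scriptscriptstyle P}^{k}\right)$ and locate the peak from the sign of $\pi_{\scriptscriptstyle P}^{k}-\pi_{\scriptscriptstyle S}^{k}$ on either side of $k^*$. Your repackaging via the monotone sum $S_k=\Delta_{k-1}+\Delta_k$ is a minor variation (it additionally yields the paper's Corollary 1.1 as a byproduct), not a different method.
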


\begin{proof}
Note that 
$\psi_k = T^+_{k-1} /T^-_k \psi_{k-1}
$
since
$\psi_k  = \psi_0 \Pi^k_{n=1} T^+_{n-1} /T^-_n 
= \psi_0 T^+_{k-1} /T^-_k \Pi^{k-1}_{n=1} T^+_{n-1} /T^-_n
= T^+_{k-1} /T^-_k \psi_{k-1}
$.
Let $\pi_{\scriptscriptstyle P}^k \equiv \pi_{\scriptscriptstyle P}(k) = \alpha /\left(C -\lambda_{\scriptscriptstyle P}\right) +p_1 =\alpha /\left(C -\lambda x_{\scriptscriptstyle P}\right) +p_1 =\alpha /\left(C -\lambda k/N\right) +p_1$
and $\pi_{\scriptscriptstyle S}^k \equiv \pi_{\scriptscriptstyle S}(k) = \pi_{\scriptscriptstyle S} =\alpha /\left(C -\lambda\right) +p_2$.
Note that $\pi_{\scriptscriptstyle P}^k > \pi_{\scriptscriptstyle S}^k$ for $k/N < \lambda_{\scriptscriptstyle P}^*/\lambda_{\scriptscriptstyle P}$,
$\pi_{\scriptscriptstyle P}^k = \pi_{\scriptscriptstyle S}^k$ for $k/N = \lambda_{\scriptscriptstyle P}^*/\lambda_{\scriptscriptstyle P}$,
and $\pi_{\scriptscriptstyle P}^k < \pi_{\scriptscriptstyle S}^k$ for $k/N > \lambda_{\scriptscriptstyle P}^*/\lambda_{\scriptscriptstyle P}$.
For $A_{\scriptscriptstyle P} =A_{\scriptscriptstyle S} =1$,
we have 
$T^+_{k-1} /T^-_k         = q\left(\pi_{\scriptscriptstyle P}^{k-1} -\pi_{\scriptscriptstyle S}^{k-1}\right) /q\left(\pi_{\scriptscriptstyle S}^k -\pi_{\scriptscriptstyle P}^k\right) $.
               
For an integer $k \le k^* -1$, we have $\pi_{\scriptscriptstyle P}^k > \pi_{\scriptscriptstyle S}^k$.
Since $q(\cdot)$ is strictly increasing  as well as $\pi_{\scriptscriptstyle P}^{k-1} - \pi_{\scriptscriptstyle S}^{k-1} > 0$ and $\pi_{\scriptscriptstyle S}^k - \pi_{\scriptscriptstyle P}^k < 0$,
we have $T^+_{k-1} /T^-_k         = q\left(\pi_{\scriptscriptstyle P}^{k-1} -\pi_{\scriptscriptstyle S}^{k-1}\right) /q\left(\pi_{\scriptscriptstyle S}^k -\pi_{\scriptscriptstyle P}^k\right) >1$,
yielding
$\psi_k = T^+_{k-1} /T^-_k \psi_{k-1}
> \psi_{k-1} 
$.
In other words,
$\psi_k$ increases as $k$  increases as far as $k \le k^* -1$ and,
hence, $\arg \max_{k\in \{0,\ldots,k^* -1\}}\psi_k  = k^* -1$.

For $k \ge k^* +1$, we have $\pi_{\scriptscriptstyle P}^k < \pi_{\scriptscriptstyle S}^k$.
Since $\pi_{\scriptscriptstyle P}^{k-1} -\pi_{\scriptscriptstyle S}^{k-1} \le 0$  and $\pi_{\scriptscriptstyle S}^k - \pi_{\scriptscriptstyle P}^k >0$,
we have $T^+_{k-1} /T^-_k         = q\left(\pi_{\scriptscriptstyle P}^{k-1} -\pi_{\scriptscriptstyle S}^{k-1}\right) /q\left(\pi_{\scriptscriptstyle S}^k -\pi_{\scriptscriptstyle P}^k\right) <1$,
yielding
$\psi_k= T^+_{k-1} /T^-_k \psi_{k-1}
< \psi_{k-1}
$.
Note that $\psi_{k^*+1}= T^+_{k^*-1} /T^-_{k^*} \psi_{k^*}
< \psi_{k^*}$ holds as well.
In other words,
$\psi_k$ decreases with $k$  for $k \ge k^* $
and, hence, $\arg \max_{k\in \{k^*,\ldots,N\}}\psi_k  = k^*$.

In conclusion,
$\arg \max_{k\in \{0,\ldots,N\}}\psi_k   = k^*-1 \mbox{ or } k^*$.
\end{proof}

\begin{corollary}
For $A_{\scriptscriptstyle P} = A_{\scriptscriptstyle S} =1$,
we have
\[
    \arg \max_{k\in \{0,\ldots,N\}}\psi_{k} = 
\begin{cases}
    k^*,& \text{if } |\triangle \pi_{\scriptscriptstyle P,S}^{k^* -1}| > |\triangle \pi_{\scriptscriptstyle P,S}^{k^*}| \\
    \{k^*-1, k^*\},              & \text{if } |\triangle \pi_{\scriptscriptstyle P,S}^{k^* -1}| = |\triangle \pi_{\scriptscriptstyle P,S}^{k^*}| \\
    k^*-1,              & \text{if } |\triangle \pi_{\scriptscriptstyle P,S}^{k^* -1}| < |\triangle \pi_{\scriptscriptstyle P,S}^{k^*}|
\end{cases}
\]
where $| \triangle \pi_{\scriptscriptstyle P,S}^{k^* -1} | \equiv | \pi_{\scriptscriptstyle P}^{k^* -1} -\pi_{\scriptscriptstyle S}^{k^* -1} |$ 
and
$| \triangle \pi_{\scriptscriptstyle P,S}^{k^*} | \equiv | \pi_{\scriptscriptstyle P}^{k^*} -\pi_{\scriptscriptstyle S}^{k^*} |$.

\end{corollary}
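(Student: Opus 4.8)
The plan is to leverage the Theorem directly: it already pins the global maximizer down to the two-element set $\{k^*-1,\,k^*\}$, so all that remains is to decide, among these two candidates, which carries the larger stationary probability. First I would invoke the recursion $\psi_{k^*} = (T^+_{k^*-1}/T^-_{k^*})\,\psi_{k^*-1}$ established at the start of the Theorem's proof, so that comparing $\psi_{k^*}$ with $\psi_{k^*-1}$ reduces entirely to comparing the single ratio $T^+_{k^*-1}/T^-_{k^*}$ with $1$.

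Next I would rewrite this ratio using the explicit form for $A_{\scriptscriptstyle P} = A_{\scriptscriptstyle S} = 1$, namely $T^+_{k^*-1}/T^-_{k^*} = q(\pi_{\scriptscriptstyle P}^{k^*-1} - \pi_{\scriptscriptstyle S}^{k^*-1})/q(\pi_{\scriptscriptstyle S}^{k^*} - \pi_{\scriptscriptstyle P}^{k^*})$. The key bookkeeping step is to fix the signs of the two arguments. Because $k^* = \lceil N x_{\scriptscriptstyle P}^* \rceil$, the ceiling property gives $k^*-1 < N x_{\scriptscriptstyle P}^*$, whence $\pi_{\scriptscriptstyle P}^{k^*-1} > \pi_{\scriptscriptstyle S}^{k^*-1}$ and the numerator's argument equals $|\triangle \pi_{\scriptscriptstyle P,S}^{k^*-1}|$. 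Likewise $k^* \ge N x_{\scriptscriptstyle P}^*$ forces $\pi_{\scriptscriptstyle S}^{k^*} \ge \pi_{\scriptscriptstyle P}^{k^*}$, so the denominator's argument equals $|\triangle \pi_{\scriptscriptstyle P,S}^{k^*}|$, with equality to zero occurring exactly when $N x_{\scriptscriptstyle P}^*$ is an integer. Thus the ratio is precisely $q(|\triangle \pi_{\scriptscriptstyle P,S}^{k^*-1}|)/q(|\triangle \pi_{\scriptscriptstyle P,S}^{k^*}|)$.

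Finally, since $q(\cdot)$ is strictly increasing in the noisy regime, the ordering of the two absolute utility gaps transfers monotonically to the ordering of the corresponding $q$-values, and hence to the position of the ratio relative to $1$: the ratio exceeds, equals, or falls below $1$ according as $|\triangle \pi_{\scriptscriptstyle P,S}^{k^*-1}|$ exceeds, equals, or falls below $|\triangle \pi_{\scriptscriptstyle P,S}^{k^*}|$. Translating back through the recursion gives $\psi_{k^*} > \psi_{k^*-1}$, $\psi_{k^*} = \psi_{k^*-1}$, or $\psi_{k^*} < \psi_{k^*-1}$ respectively, which is exactly the three-way case split in the statement.

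I expect the only genuine subtlety, rather than a real obstacle, to be the degenerate case in which $N x_{\scriptscriptstyle P}^*$ is an integer, where $|\triangle \pi_{\scriptscriptstyle P,S}^{k^*}| = 0$. Here one must recall that noisy imitation assumes $q(z) > 0$ even for $z \le 0$, so that $q(0) > 0$ keeps the denominator strictly positive and the ratio well defined; this case falls into the first branch and the maximizer is $k^*$. A quick check that $T^-_{k^*} > 0$ for $k^* \in \{1,\ldots,N\}$, guaranteed by the anchored-user transition probabilities, confirms that the recursion never divides by zero.
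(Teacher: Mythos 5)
Your proposal is correct and follows essentially the same route as the paper: reduce the comparison of $\psi_{k^*}$ and $\psi_{k^*-1}$ to the single ratio $T^+_{k^*-1}/T^-_{k^*} = q\left(\pi_{\scriptscriptstyle P}^{k^*-1} -\pi_{\scriptscriptstyle S}^{k^*-1}\right)/q\left(\pi_{\scriptscriptstyle S}^{k^*} -\pi_{\scriptscriptstyle P}^{k^*}\right)$, fix the signs so the arguments are the two absolute utility gaps, and let the strict monotonicity of $q$ carry the three-way ordering through. Your explicit handling of the degenerate case $|\triangle \pi_{\scriptscriptstyle P,S}^{k^*}| = 0$ (where $q(0)>0$ keeps everything well defined) is a small bit of added care the paper leaves implicit, but it is not a different argument.
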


\begin{proof}
Since $\pi_{\scriptscriptstyle P}^k > \pi_{\scriptscriptstyle S}^k$ for $k < k^*$ and $\pi_{\scriptscriptstyle P}^{k^*} -\pi_{\scriptscriptstyle S}^{k^*} \le 0$,
we have $\pi_{\scriptscriptstyle P}^{k^*-1} -\pi_{\scriptscriptstyle S}^{k^*-1} > 0$ and $\pi_{\scriptscriptstyle S}^{k^*} - \pi_{\scriptscriptstyle P}^{k^*} \ge 0$.

For $\pi_{\scriptscriptstyle P}^{k^*-1} -\pi_{\scriptscriptstyle S}^{k^*-1} > \pi_{\scriptscriptstyle S}^{k^*} - \pi_{\scriptscriptstyle P}^{k^*}$,
we have $T^+_{k^*-1} /T^-_{k^*}         = q\left(\pi_{\scriptscriptstyle P}^{k^*-1} -\pi_{\scriptscriptstyle S}^{k^*-1}\right) /q\left(\pi_{\scriptscriptstyle S}^{k^*} -\pi_{\scriptscriptstyle P}^{k^*}\right) >1$,
yielding $\psi_{k^*} = T^+_{k^*-1} /T^-_{k^*} \psi_{k^*-1} > \psi_{k^*-1}$ 
and, thus,
$\arg \max_{k\in \{0,\ldots,N\}}\psi_{k}   =k^*$.

For $\pi_{\scriptscriptstyle P}^{k^*-1} -\pi_{\scriptscriptstyle S}^{k^*-1} = \pi_{\scriptscriptstyle S}^{k^*} - \pi_{\scriptscriptstyle P}^{k^*}$, we have $T^+_{k^*-1} /T^-_{k^*}         = q\left(\pi_{\scriptscriptstyle P}^{k^*-1} -\pi_{\scriptscriptstyle S}^{k^*-1}\right) /q\left(\pi_{\scriptscriptstyle S}^{k^*} -\pi_{\scriptscriptstyle P}^{k^*}\right)= 1$,
yielding $\psi_{k^*} 
= T^+_{k^*-1} /T^-_{k^*}\psi_{k^*-1} =\psi_{k^*-1}$  and, hence, 
$\arg \max_{k\in \{0,\ldots,N\}}\psi_k   =\{k^*-1, k^*\}$.

For $\pi_{\scriptscriptstyle P}^{k^*-1} -\pi_{\scriptscriptstyle S}^{k^*-1} < \pi_{\scriptscriptstyle S}^{k^*} - \pi_{\scriptscriptstyle P}^{k^*}$,
we have $T^+_{k^*-1} / T^-_{k^*}       = q\left(\pi_{\scriptscriptstyle P}^{k^*-1} -\pi_{\scriptscriptstyle S}^{k^*-1}\right) /q\left(\pi_{\scriptscriptstyle S}^{k^*} -\pi_{\scriptscriptstyle P}^{k^*}\right) < 1$,
yielding $\psi_{k^*} =T^+_{k^*-1} /T^-_{k^*} \psi_{k^*-1}
< \psi_{k^*-1}$ and, thus, 
$\arg \max_{k\in \{0,\ldots,N\}}\psi_k   =k^*-1$.
\end{proof}

\section{Social Welfare}
We need to measure the system efficiency under the noisy imitation without/with anchored users
as well as the noise-free imitation.
	
\subsection{Price of Anarchy}

The Price of Anarchy (PoA) is one of the most popular performance metrics, quantifying the loss of efficiency as the ratio between the cost of the worst stable outcome  and the cost of the optimal outcome \cite{Koutsoupias:2009fk}.
As in Ref.\,\cite{Elias:2013qf},
we define
social welfare    
$S(x_{\scriptscriptstyle P}) 
=\lambda\left[x_{\scriptscriptstyle P} /\left(C -\lambda x_{\scriptscriptstyle P}\right) +\left(1 -x_{\scriptscriptstyle P}\right) /\left(C -\lambda\right)\right]
$
as the total delay experienced by PUs and SUs
and
 \begin{equation}
\mbox{PoA} = S(x_{\scriptscriptstyle P}^*)/S
_{\min}
 \end{equation} 
where   
$S(x_{\scriptscriptstyle P}^*)$ denotes the total delay experienced at the stable equilibrium point $x_{\scriptscriptstyle P}^*$
and
$S_{\min} =2 \left(\sqrt{C/(C-\lambda)} -1\right)$ is the social optimum of the total delay,
which is obtained at $x_{\scriptscriptstyle P}$ where $d S / d x_{\scriptscriptstyle P} =0$ \cite{Elias:2013qf}.
For the noisy imitation without anchored users, 
we have 
\begin{equation}
\mbox{PoA}  =\frac{\lambda}{C -\lambda} \frac{1}{S_{\min}}
=\frac{\lambda}{2\sqrt{C -\lambda}\left(\sqrt{C} -\sqrt{C -\lambda}\right)}
\end{equation}
because the two absorbing states ($k=0,N$) corresponding to $x_{\scriptscriptstyle P}^*=0 \mbox{ and } 1$ are the only stable states 
as well as
$S(0) =S(1) =\lambda /\left(C -\lambda\right)$,

\subsection{Expected Price of Anarchy}

Since the noisy imitation with the anchored users yields an irreducible Markov chain that does not have any equilibrium point,
PoA is not suitable as a performance metric.
Because  an irreducible Markov chain  yields a unique stationary distribution,
we instead use Stationary Expected Social Welfare  
$\mathbb{E}_{\delta}[S] = \sum_{k=0}^N S\left(x_{\scriptscriptstyle P}(k)\right)\psi_i
$
where $S\left(x_{\scriptscriptstyle P}(k)\right)$ denotes the social welfare when the total number of PUs is $k$
and  $x_{\scriptscriptstyle P}(k) =k/N$  \cite{Auletta:2013uq}.
Analogous to PoA,
we define the Expected PoA as
\begin{equation}
\mbox{PoA}_{E} =\frac{\mathbb{E}_{\delta}[S]}{S_{\min}}.
\end{equation}
Even for 
the noise-free imitation,
$\mbox{PoA}_E$ is better suited than PoA
because the long-run state 
is a stationary probability distribution $\psi_k$ (Eq.\,\ref{eq_dist_free})
rather than an equilibrium point
due to the discrete nature of the system.

\section{Results}

We set $C=100$, and $\alpha=1$  as in Ref.\,\cite{Elias:2013qf}.

\subsection{Noise-free vs.\,Noisy Imitation}

Fig.\,\ref{fig_noise-free_imitation}\,(a) shows the long-run state  of a Markov chain 
under the noise-free  imitation protocol.
It is well approximated by the Nash equilibrium
that the replicator dynamics predicts.
Fig.\,\ref{fig_noise-free_imitation}\,(b) shows the $\mbox{PoA}_E$ of the system.
We have an efficient system performance,
yielding only a small loss of efficiency with respect to the social optimum (i.e.\,$\mbox{PoA}_E \approx 1$)
unless the network traffic $\lambda$ is too close to the capacity $C$ and the population size $N$ is too small.
\begin{figure}[t] 
\centering
      \includegraphics[height=0.25\textwidth]{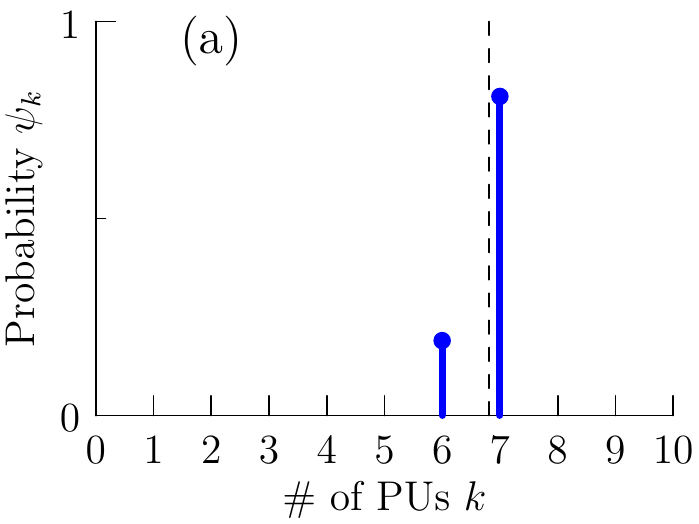}   
\includegraphics[height=0.25\textwidth]{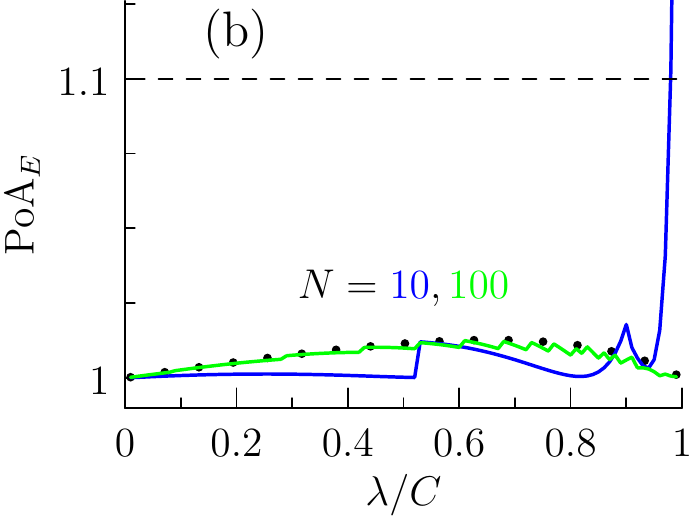}             
\caption{
The system  under  the noise-free imitation.
(\textbf{a}) 
The stationary probability distribution $\psi_k$ of the number of PUs
with an interior initial condition  $k \in \{1, \ldots,N-1\}$ where $N=10$ and $\lambda=30$.
The distribution is concentrated only on $k =k^* -1$ and $k =k^*$ 
where $k^* =7$.
The dashed vertical  line corresponds to the Nash equilibrium $N x^*_P (\approx 6.8)$ 
predicted by the deterministic replicator equation.
(\textbf{b})\,$\mbox{PoA}_E$ under various traffic $\lambda$ with $N=10$ and $100$.
The dashed horizontal line indicates $ \mbox{PoA}_E=1.1$, corresponding to a loss of efficiency of 10\% with respect to the social optimum.
The dotted curve indicates PoA of the Nash equilibrium,
which tends to be higher than $ \mbox{PoA}_E$ 
unless $\lambda/C \approx 1$ (where $C=100$) and $N$ is too small.
}
\label{fig_noise-free_imitation}
\end{figure} 
Fig.\,\ref{fig_noisy_imitation}\,(a) shows the outcome under the noisy imitation without anchored users.
The noisy imitation yields absorbing states of either all-SUs 
or all-PUs,
which the deterministic replicator dynamics is inadequate to capture.
Fig.\,\ref{fig_noisy_imitation}\,(b) shows the corresponding PoA
that reveals significant loss of efficiency  in a wide range of the network traffic.

\begin{figure}[t]
\centering
\includegraphics[height=0.25\textwidth]{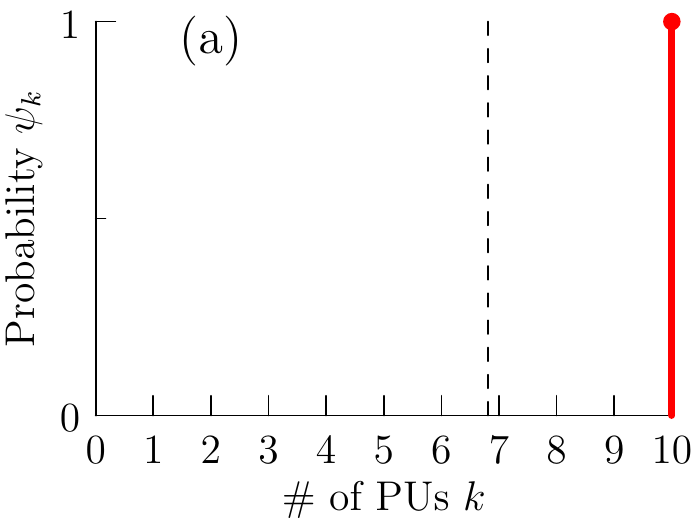}     
\includegraphics[height=0.25\textwidth]{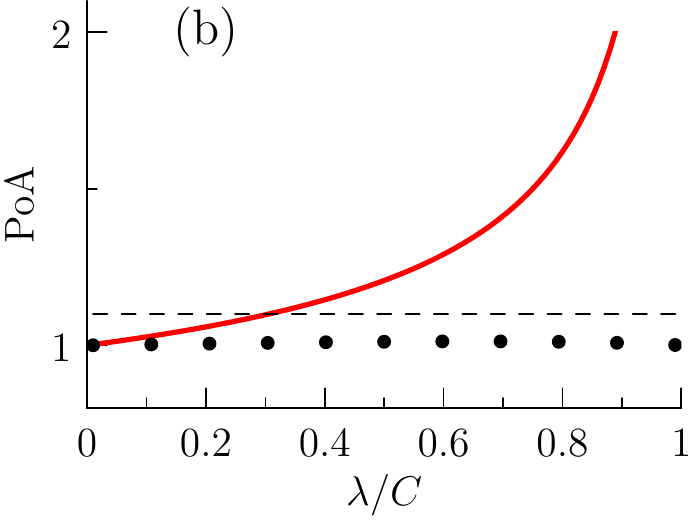}        
\caption{
The system under the noisy imitation.
(\textbf{a}) 
Regardless of the initial state,
the system ends up with one of the absorbing states, all-SUs ($k=0$) and all-PUs ($k=N$) where $N=10$ and $\lambda=30$;
only the case of all-PUs is shown.
(\textbf{b})\,PoA at the absorbing state. 
The loss of efficiency is significant  in a wide range of the network traffic,
e.g.\,PoA $>$ 1.1 for $\lambda/C > 0.35$. 
}
\label{fig_noisy_imitation} 
\end{figure}

\subsection{Noisy Imitation with Anchored Users}
We use Fermi function $q(z) =\left[1 +\exp(-\beta z)\right]^{-1} 
$
where $\beta$ controls 
the (inverse) level of noise,
which well captures the noisy imitation \cite{traulsen2006stochastic}.
Fig.\,\ref{fig_distribution} shows the stationary distributions due to an anchored user set up by each of the two network operators
as well as 
$\mbox{PoA}_E$.	
The peak of each distribution well corresponds to the Nash equilibrium 
predicted by the replicator population dynamics,
being within the distance $1/N$ of the Nash equilibrium.
The inclusion of the anchored users significantly improves the system efficiency under the noisy imitation
(i.e.\,$\mbox{PoA}_E < 1.1$) for a wider range of the traffic
than that
without anchored users. 
The system performance $\mbox{PoA}_E$ improves and converges towards PoA
of the Nash equilibrium  
as 
the level of noise drops 
(Fig.\,\ref{fig_distribution}\,(a))
or the population size increases
(Fig.\,\ref{fig_distribution}\,(b)).
Fig.\,\ref{fig_distribution}\,(b) also shows that Gaussian distributions  centred at the Nash equilibrium are good approximations of the stationary distributions,
the distributions being concentrated near the Nash equilibrium.
Even if each user can behave suboptimally,
the stationary distribution
converges toward that of an efficient system performance corresponding to the Nash equilibrium.

\begin{figure}[t]
\centering
\includegraphics[height=0.24\textwidth]{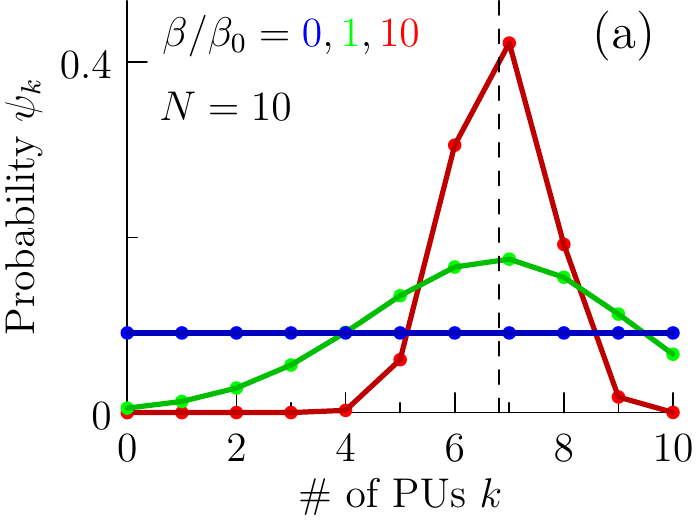}         
\includegraphics[height=0.24\textwidth]{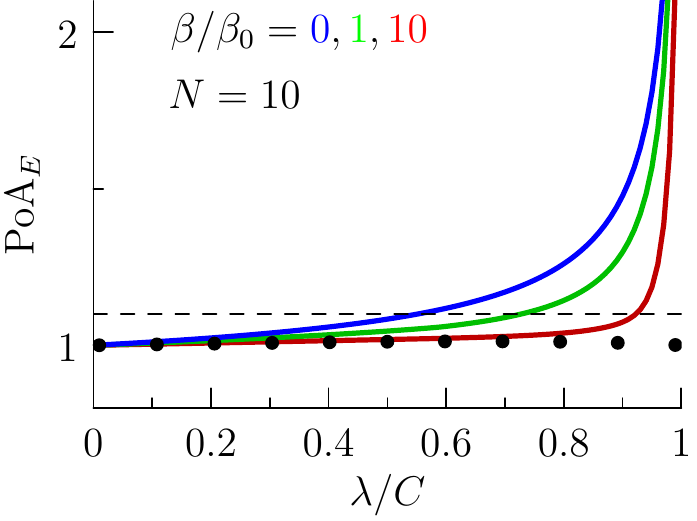}  
\vspace{0.2cm}\\
\includegraphics[height=0.25\textwidth]{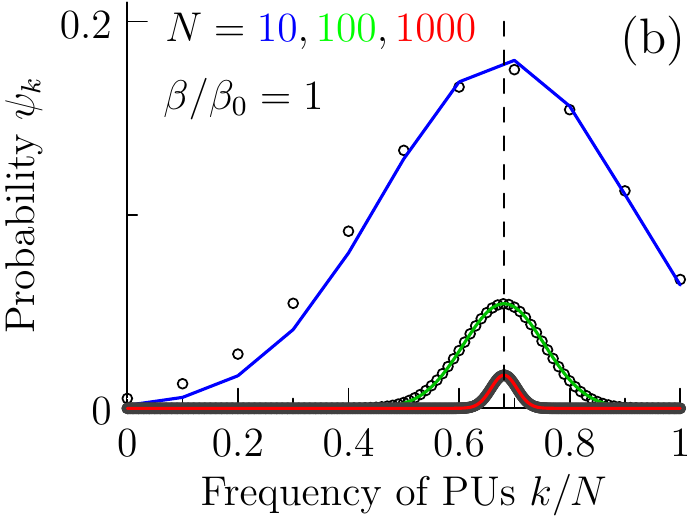}            
\includegraphics[height=0.25\textwidth]{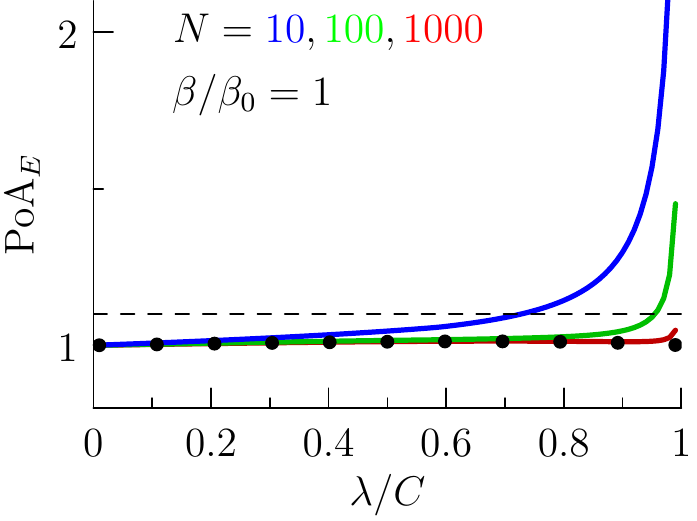}                
\caption{
The system under the noisy imitation with two anchored users; one per operator, $A_{\scriptscriptstyle P} =A_{\scriptscriptstyle S} =1$. 
(\textbf{a}) 
Stationary distributions and  $\mbox{PoA}_E$
at various levels of (inverse) noises $\beta / \beta_0 = 0, 1 \mbox{ and } 10$
where $\beta_0 = \max_k | \pi_{\scriptscriptstyle P}^k -\pi_{\scriptscriptstyle S}^k |$.
$N=10$ and $\lambda=30$.
The less noise (i.e.\,$\beta / \beta_0 \rightarrow \infty$), the narrower spreading of the distribution
and the better system performance.
(\textbf{b}) 
Stationary distributions and $\mbox{PoA}_E$
at various population sizes $N=10, 100$, and $1000$.
 $\beta / \beta_0 = 1$.
The larger population ($N \rightarrow \infty$), the narrower (relative) spreading of the distribution
and the better system performance.
} 
\label{fig_distribution}  
\end{figure}

\section{Conclusions}
For the network selection game between  primary and secondary networks,
we show that requiring only local information,
the noise-free imitation  among cognitive radio users 
drives the system to the state well approximated by the Nash equilibrium of
the replicator population dynamics
and yields an efficient system performance.
In more realistic situations,
however,
the imitation process becomes noisy
and it drives the system away from the Nash equilibrium to the state of either
all-primary users or all-secondary users, resulting in a sub-optimal system performance.
To overcome the sub-optimality of the noisy imitation, 
we introduce the notion of anchored network users to be set up by 
the self-interested network operators,
which yields a stationary distribution peaked at the Nash equilibrium.
It significantly  improves the system performance, which converges towards that of the Nash equilibrium.

%
%
%
%
%
%
%

\bibliographystyle{plain}
\bibliography{finite_population}


\end{document}